\begin{document}
\title{Continuous decomposition of quantum measurements via Hamiltonian feedback}
\author{Jan Florjanczyk} \email{florjanc@usc.edu}
\author{Todd A. Brun} \email{tbrun@usc.edu}
\affiliation{Center for Quantum Information Science and Technology, \\
Communication Sciences Institute, Department of Electrical Engineering, \\
University of Southern California Los Angeles, CA 90089, USA.}
\date{\today}

\pacs{03.65.Aa, 03.65.Ta}

\keywords{quantum continuous measurement, quantum feedback control, Jordan algebras}
\begin{abstract}
We characterize the set of generalized quantum measurements that can be decomposed into a continuous measurement process using a stream of probe qubits and a tunable interaction Hamiltonian. Each probe in the stream interacts weakly with the target quantum system, then is measured projectively in a standard basis. This measurement result is used in a closed feedback loop to tune the interaction Hamiltonian for the next probe. The resulting evolution is a stochastic process with the structure of a one-dimensional random walk. To maintain this structure, and require that at long times the measurement outcomes be independent of the path, the allowed interaction Hamiltonians must lie in a restricted set, such that the Hamiltonian terms on the target system form a finite dimensional Jordan algebra. This algebraic structure of the interaction Hamiltonians yields a large class of generalized measurements that can be continuously performed by our scheme, and we fully describe this set.
\end{abstract}
\maketitle

Many quantum systems either exhibit slow measurement read-out times or can only be probed weakly. Under such conditions, it is natural to monitor the systems continuously while simultaneously exerting some closed-loop feedback. Experiments can already be performed with such low latency that feedback can be performed continuously in real time~\cite{rydberg, rydberg-feedback}. While generalized continuous measurements have been studied~\cite{weakuniversal, generalizedstochastic}, in most systems the diffusive weak measurements~\cite{simplemodel} that constitute the continuous process must be applied via coupling to a probe system. Previously, we've studied a system with closed-loop feedback applied to a stream of probe qubits interacting with the system by a fixed Hamiltonian~\cite{constham}. Here, we investigate the possibilities that arise from closed-loop feedback when the interaction Hamiltonian is itself subject to control.

A key feature of~\cite{constham} was the derivation of a reversibility equation which was used to restrict the class of measurements that admitted a continuous decomposition. This equation is necessary again in this work to ensure that the final measurement at long times is independent of the details of the path. Although we'll restrict our analysis to qubit probes and two-outcome measurements, we note that general two-outcome measurements are sufficient building blocks for $n$-outcome measurements~\cite{nonlocality, weakuniversal}.

\begin{figure}
  \includegraphics[width=0.45\textwidth]{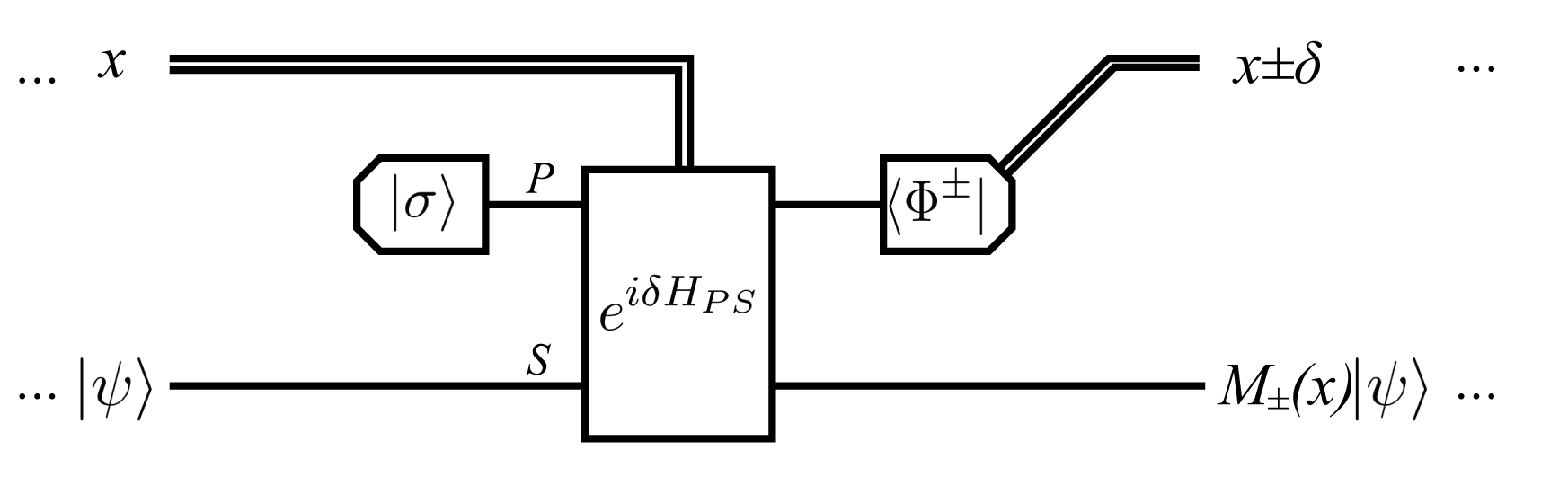}
  \caption{The system $S$ is continuously measured. At each timestep, we perform a weak measurement by preparing the probe $\ket{\s}$ and tuning the interaction Hamiltonian $H_{PS}(x)$ based on a pointer variable $x$. The system and probe interact for a short time $\d$ and the probe is measured in an orthogonal detector basis $\bra{\Phi^{\pm}}$. The measurement result from the detector is used to update the pointer variable from $x$ to $x \pm \d$ and the procedure is repeated with the new value.}
    \label{fig:circuit}
\end{figure}

Consider a quantum system $S$ undergoing a stochastic evolution driven by two-outcome diffusive weak measurements. The outcome of any particular step during the evolution is one of two \emph{weak measurement step operators} $M_{\pm}(x)$. These step operators are functions of a pointer variable $x$ which updates with each outcome. The exact feedback scheme is illustrated in Figure~\ref{fig:circuit}, and the process terminates when $x$ reaches a fixed constant $\pm X$. The reversibility condition can be written
\begin{equation}
    \label{eqn:Revers}
    M_{\mp}(x \pm \d) M_{\pm} (x) \propto I .
\end{equation}
From the above equation, two consecutive outcomes that step ``forward'' from $x$ to $x+\d$ then ``backward'' from $x+\d$ to $x$ have no net effect on $\ket{\psi}$. However, rather than track the evolution of $\ket{\psi}$ directly, we can express the total action of our procedure as the \emph{total walk operator}
\begin{equation}
  \label{eqn:totalwalkoperator}
  M(x) = \lim_{\d \rightarrow 0} \prod_{j=1}^{ \lfloor x / \d \rfloor} M_+(j\d),
\end{equation}
For negative values of $x$ we replace $M_+$ with $M_-$ above. We identify the \emph{endpoint} operators $M_1 \propto M(X)$ and $M_2 \propto M(-X)$ with the final measurement being decomposed by this process. We will consider a simple model of probe-state interaction that will generate $M_{\pm}(x)$. In~\cite{constham} we found that weak measurements with qubit probes had to form a \emph{probe-basis} on the qubit Hilbert space. In particular, we required that the probe state, the orthogonal quantum states of the detector, and the probe eigenstates of the interaction Hamiltonian, have mutually orthonormal representations on the Bloch sphere. For this reason, we choose the interaction Hamiltonian to be $H_{PS} = Y_P \ox \hat{\e}(x)$, the probe state to be $\ket{0}$, and the detector states to be $\bra{\pm}$. The operator acting on the system $S$ is $\hat{\e}$ and is defined to be an $x$-dependent linear combination of $d$ constant Hamiltonian terms,
\begin{equation}
  \hat{\e}(x) = \sum_{i=0}^d p_i(x) H_i . \label{eqn:epsilon}
\end{equation}
The weak measurement step operators of Figure~\ref{fig:circuit} then become
\begin{align} 
  M_{\pm}(x) & = \bra{\pm} e^{i \d H_{PS}(x)} \ket{0} \\
  & \approx \frac{1}{\sqrt{2}} I \mp \frac{\d}{\sqrt{2}} \hat{\e}(x) - \frac{\d^2}{2\sqrt{2}} \hat{\e}^2(x). \nonumber
\end{align}
The reversibility condition of Eq.~\ref{eqn:Revers} can now be rewritten in terms of $\hat{\e}(x)$. Note that this condition need only be satisfied up to $O(\d^2)$ since the random walk induced on the pointer variable $x$ will take $O(N^2)$ steps to converge when $N = \lfloor X/\d \rfloor$. Collecting terms by orders of $\d$ yields
\begin{equation*}
  M_{\mp}(x \pm \d) M_{\pm}(x) = \frac{I}{2} + \frac{\d^2}{2} \lp \partial_x \hat{\e}(x) - 2 \hat{\e}^2(x)\rp + O(\d^3).
\end{equation*}
Let $\a(x)$ be the proportionality constant in Eq.~\eqref{eqn:Revers}. We find that the reversibility equation reduces to
\begin{equation}
  \label{eqn:revers}
  \partial_x \hat{\e}(x) = 2 \hat{\e}^2(x) + \a(x) I.
\end{equation}
In the derivations that follow, we will ignore the $\a(x) I$ term, as it will not change the class of measurements that satisfy the reversibility equation. (In practice, the term can be reintroduced to help find bounded solutions.)

Consider the set of controls that appear in Eq.~\eqref{eqn:epsilon}. Without loss of generality, we can always assume that $H_0 = I$ since the action of $I$ is equivalent to an overall phase on the probe system. The reversibility equation in Eq.~\eqref{eqn:revers} can then be rewritten as
\begin{equation}
  \sum_{k=0}^d \partial_x p_k(x) H_k = \sum_{i, j=0}^d p_i(x) p_j(x) \frac{1}{2} \lb H_i,  H_j \rb \label{eqn:symmetrizing}.
\end{equation}
where $\lb \cdot, \cdot \rb$ is the anti-commutator. It will be useful to introduce the tensor $\G_{ij}^k$ for expressing the action of the anti-commutator on the matrices $H_i$. In particular,
\begin{equation}
  \frac{1}{2} \lb H_i, H_j \rb = \sum_{k=0}^{n(n-1)/2} \G_{ij}^k H_k .
\end{equation}
We choose the matrices $H_i$ for $i>d$ such that they form a basis for $\mcal{H}_n \lp \mbC \rp$, the space of all $n$-dimensional complex Hermitian matrices. We will use $\G^{(k)}$ to denote the matrix resulting from fixing the index $k$. The reversibility equation Eq.~\eqref{eqn:revers} can then be read as
\begin{equation} \label{eqn:di_xp}
  \lb \begin{array}{rlcl}
    \partial_x p_k &= \vec{p}^T \G^{(k)} \vec{p} & \hspace{0.15in} & 0 \leq k \leq d \\
    0 &= \vec{p}^T \G^{(k)} \vec{p} & \hspace{0.15in} & d < k .
  \end{array} \right.
\end{equation}

We now present our main result which characterizes solutions to the above equations. Let us denote $\mbF = \spa{H_i}$ so that $\hat{\e} \in \mbF$. We prove the following lemma about solutions to Eq.~\eqref{eqn:di_xp}.

\begin{lem} \label{lem:closure}
Any solution $\hat{\e}(x)$ to Eq.~\eqref{eqn:di_xp} must lie entirely in $\mbV$, a subspace of $\mbF$ that is closed under anti-commutation.
\end{lem}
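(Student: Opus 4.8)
The plan is to read the two blocks of \eqref{eqn:di_xp} geometrically. The lower block, $0=\vec{p}^{\,T}\G^{(k)}\vec{p}$ for $k>d$, says precisely that the expansion $\hat{\e}^2=\sum_k(\vec{p}^{\,T}\G^{(k)}\vec{p})H_k$ has vanishing components along every basis element $H_k$ with $k>d$; equivalently $\hat{\e}^2(x)\in\mbF$ for all $x$. The upper block is then the coordinate form of $\partial_x\hat{\e}=2\hat{\e}^2$ whose right-hand side, by the lower block, already lies in $\mbF$. Thus a solution is a curve $\hat{\e}(x)\in\mbF$ whose velocity is $2\hat{\e}^2$ and whose square never leaves $\mbF$.

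First I would bootstrap from $\hat{\e},\hat{\e}^2\in\mbF$ to all powers. Since the identity $\hat{\e}^2(x)\in\mbF$ holds for every $x$ and $\mbF$ is a fixed finite-dimensional subspace, I may differentiate it freely, and differentiating an $\mbF$-valued curve returns a vector in $\mbF$. Because a matrix commutes with its own powers, $\partial_x(\hat{\e}^m)=2m\,\hat{\e}^{m+1}$ is a nonzero multiple of $\hat{\e}^{m+1}$ for $m\ge1$. An induction on $m$ then gives $\hat{\e}^m(x)\in\mbF$ for every $m\ge1$ and every $x$; together with $H_0=I$ this places the entire commutative algebra generated by $\hat{\e}(x)$ inside $\mbF$.

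Next I would exhibit the anti-commutation-closed subspace. The equation $\partial_x\hat{\e}=2\hat{\e}^2$ is a matrix Riccati equation, solved by $\hat{\e}(x)\propto(xI+C)^{-1}$ for a constant Hermitian $C$, so the values $\{\hat{\e}(x)\}$ form a mutually commuting family lying in one commutative algebra. Fixing a base point $x_0$, I would take $\mbV=\mbC[\hat{\e}(x_0)]$, the real span of $\{I,\hat{\e}(x_0),\hat{\e}^2(x_0),\dots\}$. By the previous step $\mbV\subseteq\mbF$, and being commutative and associative it is automatically closed under the anti-commutator $\frac12\lb A,B\rb=AB$. Finally, since the solution of the analytic Riccati equation is itself analytic, $\hat{\e}(x)$ equals its Taylor series about $x_0$, whose coefficients are scalar multiples of the powers $\hat{\e}^{m}(x_0)\in\mbV$; the series lies in the closed subspace $\mbV$ near $x_0$, and analytic continuation carries the conclusion across the connected domain.

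The main obstacle is exactly this last passage, from ``the jet of the solution at one point lies in $\mbF$'' to ``the whole curve is confined to a single anti-commutation-closed subspace.'' This is where commutativity of the flow is indispensable: were the values $\hat{\e}(x)$ not mutually commuting, their anti-commutator products could escape $\mbF$ and no such $\mbV$ need exist. I would also dispose of two technical points. The $\a(x)I$ term dropped after \eqref{eqn:revers} merely shifts $\hat{\e}$ by multiples of $I\in\mbF$ and so cannot disturb the induction. And in the degenerate case where $\hat{\e}(x_0)$ is singular, so that $C\notin\mbC[\hat{\e}(x_0)]$, I would instead work in the common eigenbasis of the commuting family, in which each eigen-component independently solves the scalar Riccati equation $\lambda'=2\lambda^2$.
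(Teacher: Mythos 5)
Your proof is correct, but it takes a genuinely different route from the paper's. You read the lower block of Eq.~\eqref{eqn:di_xp} as the statement $\hat{\e}^2(x) \in \mbF$, bootstrap to all powers (using that derivatives of a curve confined to the finite-dimensional subspace $\mbF$ remain in $\mbF$, and that $\di_x (\hat{\e}^m) = 2m\,\hat{\e}^{m+1}$ because $\hat{\e}$ commutes with its own derivative), and then use the Riccati structure of the flow to confine the whole trajectory to the unital commutative algebra generated by $\hat{\e}(x_0)$ --- which, being commutative and associative, is automatically closed under $\frac12 \lb \cdot, \cdot \rb$. This is sound (the one slip is cosmetic: your $\mbV$ should be the \emph{real} span of $\{I, \hat{\e}(x_0), \hat{\e}^2(x_0), \dots\}$, as you in fact describe, not a complex polynomial algebra, so that it stays inside the Hermitian matrices), and your handling of the $\a(x)I$ term and of singular $\hat{\e}(x_0)$ via the common eigenbasis is adequate. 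The paper instead treats each constraint $\vec{p}^T \G^{(k)} \vec{p} = 0$ as a quadratic form on $\mbR^d$, applies the Witt decomposition into hyperbolic planes, nullspace, and anisotropic part to characterize its zero set as a union of linear subspaces, and recursively restricts the controls until the resulting span of Hermitian matrices is closed under anti-commutation, finishing with the same flow-invariance observation you use ($\hat{\e} \in \mbV \Rightarrow \hat{\e}^2 \in \mbV \Rightarrow \di_x \hat{\e} \in \mbV$). The trade-off: your argument is more elementary (no quadratic-form theory) and proves something strictly stronger --- every solution curve is a mutually commuting family, which essentially anticipates Lemma~\ref{lem:diagonal} --- but it is existential rather than constructive: your $\mbV$ is tailored to one particular solution, whereas the paper's recursion enumerates in advance \emph{all} anti-commutation-closed subspaces compatible with a given control set $\{H_i\}$. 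That enumeration is the ``implicit algorithm'' the paper relies on afterwards, and it is what connects the lemma to the Jordan-algebra classification of Lemma~\ref{lem:wedderburn} and Table~\ref{tab:repr}; your approach, taken alone, would not tell an experimenter which closed subspaces (hence which measurements) a given family of Hamiltonian terms makes reachable.
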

\begin{proof}
We note that if $\mbF$ is already closed under anti-commutation, then the reversibility equation reduces to an initial value problem in terms of the control coefficients $\vec{p}(x)$ at $x=0$. However if $\mbF$ is not closed under anti-commutation, then we must characterize the set of vectors $\vec{p}$ such that Eq.~\eqref{eqn:di_xp} is satisfied. To do so, consider choosing any $k>d$ and solving the associated equation $\vec{p}^T \G^{(k)} \vec{p} = 0$. Note that the matrix $\G^{(k)}$ is symmetric and defines a quadratic space over $\mbR^d$. Every quadratic space admits a Witt decomposition~\cite{omeara} which in our case is
\begin{equation}
    \lp \G^{(k)}, \mbR^d \rp \cong \bigoplus_{i=0}^{N} W_i \oplus V_0 \oplus V' .
\end{equation}
In the above, $W_i$ are hyperbolic planes, $V_0$ is the nullspace of $\G^{(k)}$, and $V'$ is an anisotropic subspace of $\mbR^d$. Solutions to $\vec{x}^T W_i \vec{x} = 0$ are $\spa{[1,1]} \cup \spa{[1,-1]}$. Additionally, there are no vectors which satisfy $\vec{x}^T V' \vec{x} = 0$ for the anisotropic subspace $V'$. Let $T^{(k)}$ be the isomorphism of $\lp \G^{(k)}, \mbR^d \rp$ to $\lp I_d, \mbR^d \rp$ and $\vec{p} = T^{(k)} \vec{q}$. Then possible solutions to $\vec{p}^T \G^{(k)} \vec{p} = 0$ must lie in
\begin{equation}
    V = T^{(k)} \lp \bigoplus_{i=0}^{N} \spa{\ls 1, x_i \rs} \oplus V_0 \rp
\end{equation}
for a fixed choice of $x_i = \pm 1$. To fully solve Eq.~\eqref{eqn:di_xp} we must now recurse the above procedure. At each step we restrict $\vec{p}$ to lie in the subspace $V$ defined by a choice of $x_i$. We then define a new matrix basis for the controls restricted to $V$ and generate a new set of $\G^{(k)}$ matrices. We then choose a new $k$ and decompose $V$ using $\G^{(k)}$. Since the order in which the $k$ are chosen will affect the form of $V$, it is also important to enumerate all sequences of choices of $k$ and $x_i$. This procedure terminates when the vector space of Hermitian matrices $\mbV$ formed from $V$ is closed under anti-commutation. Furthermore, since the Witt decomposition is unique (up to isometries of $V'$), we can guarantee that this procedure lists all closed subspaces contained in $\mbF$. It remains only to show that if $\vec{p}(0) \in V$ for a particular sequence of choices of $k$ and $x_i$, that $\vec{p}(x)$ will remain in the same subspace for all other values of $x$. This follows directly, however, from the fact that if $\hat{\e}(x) \in \mbV$ then $\hat{\e}^2(x) \in \mbV$ and so $\di_x \hat{\e} \in \mbV$.
\end{proof}

Lemma~\ref{lem:closure} establishes that in order to solve the reversibility equation, one must use a set of controls whose span is closed under anti-commutation. The proof of the lemma also includes an implicit algorithm for finding closed subspaces given a set of Hermitian matrices. The next lemma gives the structure of the subspaces enumerated by lemma~\ref{lem:closure}.

\begin{lem} \label{lem:wedderburn}
The $\hat{\e}(x)$ operator has the form
\begin{equation}
    \label{eqn:eps_decomp}
    \hat{\e}(x) = \bigoplus_{l=1}^{S(\mbV)} U_l(x) D_l(x) U_l^{\dag}(x) .
\end{equation}
where $S(\mbV)$ is the number of simple components of the algebra $\mbV$ (with anti-commutation as its product), and $D_l(x)$ and $U_l(x)$ correspond to the $l^{\mathrm{th}}$ simple component and are given by Table~\ref{tab:repr}.
\end{lem}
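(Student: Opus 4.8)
The plan is to recognize $\mbV$ as a formally real (Euclidean) Jordan algebra and then apply its structure theory. By Lemma~\ref{lem:closure} the operator $\hat{\e}(x)$ lives in a real subspace $\mbV \subseteq \mcal{H}_n(\mbC)$ closed under the anti-commutator, and since the symmetrized product is exactly the Jordan product, $\mbV$ is a special Jordan algebra. Its decisive property is formal reality: for Hermitian $A$ the Jordan square is the ordinary square $A \circ A = A^2$, with $\mathrm{Tr}(A^2) = \|A\|_F^2 > 0$ whenever $A \neq 0$, so a sum of Jordan squares vanishes only if every summand does. A finite-dimensional formally real Jordan algebra has trivial radical and is therefore semisimple, which is what allows a clean decomposition.

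Next I would invoke the Wedderburn-type structure theorem for semisimple Jordan algebras to write $\mbV$ uniquely as a direct sum of simple ideals,
\begin{equation*}
  \mbV = \bigoplus_{l=1}^{S(\mbV)} \mbV_l ,
\end{equation*}
where $S(\mbV)$ is the number of simple components named in the statement. The crucial translation is to make this abstract splitting concrete in the matrix representation. Each ideal $\mbV_l$ carries a unit $E_l$, and these are mutually Jordan-orthogonal idempotents, $E_l \circ E_m = 0$ for $l \neq m$. For Hermitian projectors this forces genuine operator orthogonality: $E_l \circ E_m = 0$ gives $E_l E_m P$-type manipulations showing $E_l E_m$ is Hermitian and anticommuting, hence $E_l E_m = 0$. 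Thus the $E_l$ commute and split the Hilbert space into orthogonal invariant subspaces, simultaneously block-diagonalizing every element, so $\hat{\e}(x) = \bigoplus_l E_l \hat{\e}(x) E_l$ with the $l$-th block ranging over the simple algebra $\mbV_l$.

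It then remains to classify each simple block. By the Jordan--von Neumann--Wigner theorem a simple formally real Jordan algebra is isomorphic to one of the real symmetric, complex Hermitian, or quaternionic Hermitian matrix algebras, a spin factor, or the exceptional $3 \times 3$ octonionic Hermitian algebra. Because $\mbV$ is special, the exceptional Albert algebra (which admits no associative embedding) cannot occur, leaving precisely the families catalogued in Table~\ref{tab:repr}. For each family the abstract isomorphism is realized by a unitary change of basis $U_l(x)$ carrying the $l$-th block into the canonical representative $D_l(x)$ of its type, and reassembling the blocks produces Eq.~\eqref{eqn:eps_decomp}.

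I expect the main obstacle to be the passage from the abstract algebraic direct sum to the simultaneous block-diagonal matrix form together with the bookkeeping of multiplicities: a single abstract simple Jordan algebra may appear as several repeated matrix blocks, so one must check that the orthogonal idempotents $E_l$ separate the true physical invariant subspaces and that the extracted $U_l$ and $D_l$ coincide with the canonical forms of Table~\ref{tab:repr}. The spin-factor case is the most delicate, since its faithful matrix realization proceeds through a Clifford-algebra construction rather than a single Hermitian matrix family, and producing the explicit $U_l$, $D_l$ there requires the most care.
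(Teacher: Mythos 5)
Your proposal follows the same route as the paper's own proof: identify $\mbV$ as a finite-dimensional Jordan algebra, split it into simple ideals via the Wedderburn-type theorem, classify the simple pieces by the Jordan--von Neumann--Wigner theorem, exclude the exceptional Albert algebra, and read the concrete representations off Table~\ref{tab:repr}. You also supply two steps the paper leaves implicit, and both genuinely tighten the argument: the check that $\mbV$ is formally real, hence has vanishing radical and is semisimple (without which the direct-sum decomposition cannot be invoked), and the verification that the units $E_l$ of the simple ideals, being Hermitian Jordan-orthogonal idempotents, are genuinely operator-orthogonal, which is exactly what converts the abstract splitting into the concrete block decomposition $\hat{\e}(x) = \bigoplus_l \hat{\e}_l(x)$ that the paper merely asserts.

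There is, however, a genuine gap in your classification step, and it is more than the bookkeeping issue you flag at the end. Speciality of $\mbV$ rules out only the Albert algebra; it does not rule out the spin factors $\mathrm{JSpin}_k$, which are special --- they embed into Hermitian matrices via Clifford (gamma-matrix) constructions --- and which for $k = 4$ and $k \geq 6$ are isomorphic to none of $\mcal{H}_m(\mbR)$, $\mcal{H}_m(\mbC)$, $\mcal{H}_m(\mbH)$. A concrete instance in the present setting is $\spa{I \ox I, X \ox I, Y \ox I, Z \ox X, Z \ox Y} \subset \mcal{H}_4(\mbC)$: this five-dimensional space contains the identity, is closed under anti-commutation, and is a copy of $\mathrm{JSpin}_4$, whose dimension matches no entry of Table~\ref{tab:repr}. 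So your assertion that speciality leaves ``precisely the families catalogued in Table~\ref{tab:repr}'' is false as written, and your closing paragraph concedes the spin-factor case without resolving it; as the lemma is stated, you must either show spin factors cannot arise from Lemma~\ref{lem:closure} (they can, as above) or enlarge the table and the argument to accommodate them. To be fair, the paper's own proof contains the same omission: it asserts that only the three matrix types occur, and its stated reason for excluding the Albert algebra (octonions admit no matrix representation) does not apply to spin factors. So you have reproduced the published argument, weak point included; but the point is not cosmetic, since a spin-factor element has only two distinct eigenvalues, each of multiplicity $2^{\lfloor k/2 \rfloor - 1}$ in the minimal Clifford representation, which exceeds the multiplicities $1$, $2$, $4$ permitted by Table~\ref{tab:repr} once $k \geq 8$.
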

\begin{proof}
We begin by identifying $\mbV$ as a finite-dimensional Jordan algebra. Every such algebra accepts a Wedderburn-type decomposition~\cite{albert-wedderburn, penico-wedderburn},
\begin{equation}
    \label{eqn:wedderburn}
  \mbV \cong \bigoplus_{l=1}^{S(\mbV)} \mbB_l, 
\end{equation}
where $S(\mbV)$ is the number of simple components $\mbB_l$ of $\mbV$. A classification of all finite-dimensional simple Jordan algebras was given by Jordan, von Neumann, and Wigner~\cite{jordanmain}. The three types of Jordan algebras that can be found in our decomposition are the self-adjoint real, complex, and quaternionic matrices. The isomorphism in Eq.~\eqref{eqn:wedderburn} leaves a lot of freedom in terms of how to represent each of these simple components by Hamiltonian terms. We summarize the possible representations in Table~\ref{tab:repr}. (Note that the exceptional Albert algebra is absent, since octonions do not have a matrix representation over $\mbR$ or $\mbC$~\cite{albert}). Since $\mbV$ can be written as a direct sum, we can also write
\[ \hat{\e} = \bigoplus_{l=1}^{S(\mbV)} \hat{\e}_l(x). \]
Each operator in the direct sum can, in turn, be diagonalized to yield the form in the statement of the lemma.
\begin{table}
    \renewcommand\arraystretch{1.5}
    \renewcommand\tabcolsep{6pt}
    \begin{tabular}{|c|c|c|c|c|}
        \hline      Block $\mbB_l$                                      & $D_l(x)$                 & $U_l(x)$ \\
        \hline
        \hline    $\mcal{H}_{n}(\mbR)$                                & $\diag{\mbR^n}$      & $SO(n)$ \\
        \hline    $\mcal{H}_{n}(\mbC)$                                & $\diag{\mbR^n}$          & $SU(n)$ \\
        \hline    $\mcal{H}_{n}(\mbC) \cong \mcal{H}_{2n}(\mbR)$      & $\diag{\mbR^n}\ox I_2$   & $SO(n) \ox SO(2)$ \\
        \hline    $\mcal{H}_{n}(\mbH) \cong \mcal{H}_{2n}(\mbC)$      & $\diag{\mbR^n}\ox I_2$   & $SU(n) \ox SU(2)$ \\
        \hline    $\mcal{H}_{n}(\mbH) \cong \mcal{H}_{4n}(\mbR)$      & $\diag{\mbR^n}\ox I_4$   & $SO(n) \ox SO(4)$ \\
        \hline
    \end{tabular}
    \caption{We list all rank-$n$ representations of Jordan algebras that can be embedded into a span of Hermitian matrices. The third representation corresponds to the $2$-dimensional embedding of $\mbC$ into $\mbR$. The fourth and fifth representations correspond to $2$- and $4$-dimensional embeddings of $\mbH$ into $\mbC$ and $\mbR$. The notation $\diag{\mbR^n}$ refers to the space of $n$-dimensional diagonal matrices.}
    \label{tab:repr}
\end{table}
\end{proof}

It remains to describe the form of the endpoints of the continuous process $M_1$, $M_2$. We use the reversibility and propagation equations to solve for them directly in the following lemma.

\begin{lem} \label{lem:diagonal}
The $\hat{\e}(x)$ operator and the total walk operator $M(x)$ are simultaneously diagonalizable.
\end{lem}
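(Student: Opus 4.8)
The plan is to first turn the definition of $M(x)$ into a differential equation, and then to show that the reversibility equation forces $\hat{\e}(x)$ to keep a fixed eigenbasis, which will simultaneously diagonalize $M(x)$.

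First I would differentiate the total walk operator. Since $M(x)$ in Eq.~\eqref{eqn:totalwalkoperator} is a product of $\lfloor x/\d \rfloor$ copies of $M_+$, appending one step gives $M(x+\d) \propto M_+(x)\,M(x)$. Substituting the expansion of $M_\pm(x)$ and absorbing the overall $1/\sqrt{2}$ into the proportionality constant, the $O(\d^2)$ piece $-\tfrac{\d^2}{2}\hat{\e}^2$ contributes a total of order $\d^2 \cdot (x/\d) = \d x$, which vanishes in the limit. (Unlike the $O(N^2)$ random walk, the \emph{direct} path to $x$ uses only $N = x/\d$ factors, so these corrections do not accumulate.) Hence, up to an overall scalar, $\partial_x M(x) = -\hat{\e}(x)\,M(x)$, so that $M(x) \propto \mathcal{P}\exp\bigl(-\int_0^x \hat{\e}(s)\,ds\bigr)$ is a path-ordered exponential. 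The opposite product convention would give $-M\hat{\e}$, which is immaterial once the $\hat{\e}(s)$ are shown to commute.

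The crux is to collapse the path ordering by showing that $\hat{\e}(x)$ commutes with itself at all points. I would solve the anti-commutator-free reversibility equation $\partial_x \hat{\e} = 2\hat{\e}^2$ directly: one verifies that $\hat{\e}(x) = \hat{\e}(0)\bigl(I - 2x\,\hat{\e}(0)\bigr)^{-1}$ solves it with the correct initial data, and by uniqueness of solutions to this smooth matrix ODE it is \emph{the} solution. Because this expression is a function of $\hat{\e}(0)$ alone, every $\hat{\e}(x)$ commutes with $\hat{\e}(0)$ and hence with every $\hat{\e}(x')$; equivalently, by Lemma~\ref{lem:wedderburn} the trajectory remains inside the commutative subalgebra generated by $\hat{\e}(0)$. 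All the $\hat{\e}(x)$ therefore share the eigenbasis of the Hermitian matrix $\hat{\e}(0)$, with each eigenvalue obeying the scalar flow $\dot\lambda = 2\lambda^2$ (zero eigenvalues stay fixed, and the $\a(x)I$ term can be restored to keep the nonzero ones bounded).

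Finally, with all $\hat{\e}(s)$ mutually commuting the path ordering is trivial and $M(x) \propto \exp\bigl(-\int_0^x \hat{\e}(s)\,ds\bigr)$. The integrand is diagonal in the eigenbasis of $\hat{\e}(0)$, so $\int_0^x \hat{\e}(s)\,ds$, and hence the positive Hermitian operator $M(x)$, is diagonal in that same basis, which also diagonalizes $\hat{\e}(x)$; this yields simultaneous diagonalizability. I expect the main obstacle to be establishing $[\hat{\e}(x),\hat{\e}(x')] = 0$ rigorously, i.e.\ confirming that the matrix Riccati flow genuinely preserves the eigenbasis and correctly handling the non-invertible and finite-time blow-up cases, since everything else follows once the path ordering collapses.
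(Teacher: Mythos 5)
Your proof is correct, but it takes a genuinely different route from the paper's. The paper leans on Lemma~\ref{lem:wedderburn}: it writes each block as $\hat{\e}_l = U_l D_l U_l^\dagger$, substitutes into the reversibility equation, and does an entry-wise computation showing that $i[\partial_x G_l, D_l] + \partial_x D_l = 2D_l^2$ forces the generator entries $g^{(l)}_{ij}$ to be constant whenever the corresponding eigenvalues differ, while the eigenvalues evolve as $d_i^{(l)}(x) = \tanh(x - c_i^{(l)})$; it then passes to the rotated walk operator $N(x)$ of Eq.~\eqref{eqn:N} and argues $N$ stays diagonal because $N(0) = I$ and the rotated propagation equation preserves diagonality. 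You instead solve the $\a$-free Riccati equation in closed form, $\hat{\e}(x) = \hat{\e}(0)\lp I - 2x\,\hat{\e}(0)\rp^{-1}$, and invoke ODE uniqueness to conclude the whole trajectory is a function of $\hat{\e}(0)$, hence a commuting family with frozen eigenbasis; the path ordering then collapses and simultaneous diagonalizability is immediate. Your argument is more elementary and self-contained: it does not need the block structure of Lemma~\ref{lem:wedderburn} for this step, and it makes the frozen-eigenbasis mechanism transparent --- indeed, since the right-hand side $2\hat{\e}^2 + \a(x)I$ is a polynomial in $\hat{\e}$ with scalar coefficients, the flow preserves the commutative algebra generated by $\hat{\e}(0)$ even with $\a$ restored, which disposes of the finite-time blow-up worry you flag as your main obstacle. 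What the paper's computation buys is the explicit material reused in Theorem~\ref{thm:main}: the $\tanh(x - c_i^{(l)})$ eigenvalue solutions and the resulting diagonal entries $\exp\lp \int_0^x \tanh(y - c_i^{(l)})\,dy \rp$ of $N(x)$, which your closed form yields only after reintroducing $\a(x)I$. Two minor points: your parenthetical appeal to Lemma~\ref{lem:wedderburn} for commutativity is a misattribution (that lemma gives the Wedderburn block decomposition, not commutativity), though it is inessential to your argument; and the ``non-invertible case'' you worry about is vacuous, since it is $I - 2x\,\hat{\e}(0)$, not $\hat{\e}(0)$ itself, that must be inverted, and that is invertible for $x$ small with the general case handled by the restored $\a(x)I$ term.
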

\begin{proof}
We begin by noting that Eq.~\ref{eqn:revers} can be solved for individual blocks $\hat{\e}_l(x)$ which yield,
\begin{equation}
  \di_x \lp U_l(x) D_l(x) U_l^{\dag}(x) \rp  = 2 \lp U_l(x) D_l(x) U_l^{\dag}(x) \rp^2 .
\end{equation}
Since $U_l(x)$ is a unitary matrix we can write it as the exponent of a Hermitian matrix $G_l(x)$ and we note that $U_l^{\dag}(x) \partial_x U_l(x)= i \partial_x G_l(x)$. This reduces the above equation to
\begin{equation}
    \label{eqn:diag}
    i \ls \partial_x G_l, D_l \rs + \partial_x D_l = 2 D_l^2 .
\end{equation}
The entries of the commutator term are
\begin{equation}
    \lp \ls \partial_x G_l, D_l \rs \rp_{ij} = \di_x g^{(l)}_{ij} \lp d^{(l)}_i - d_j^{(l)} \rp
\end{equation}
from which we can infer that the diagonal entries of the commutator term are $0$ if $d^{(l)}_i \neq d^{(l)}_j$. We can show that $d^{(l)}_i(x)$ has the solution $\tanh(x-c^{(l)}_i)$, and thus for any $i,j$ such that $c^{(l)}_i \neq c^{(l)}_j$, $g_{ij}^{(l)}$ is constant. All together, Eq.~\eqref{eqn:diag} has the solution
\begin{align}
  \label{eqn:d_l}
  \lb \begin{array}{lcl}
  d^{(l)}_i(x) = \tanh(x-c^{(l)}_i) & \hspace{0.1in} & \forall \; i , \\
  g^{(l)}_{ij}(x) = g^{(l)}_{ij}(0) & & \forall\; i,j \; : \; c^{(l)}_i \neq c^{l)}_j , \\
  g^{(l)}_{ij}(x) = g^{(l)}_{ij}(x) & & \forall\; i,j \; : \; c^{(l)}_i = c^{l)}_j .
  \end{array}\right.
\end{align}
The solution above for $d_i^{(l)}(x)$ is found by reintroducing the $\a(x) I$ term to our equations (which we've ignored thus far). We note that in the cases where $c^{(l)}_i = c^{(l)}_j$, $g_{ij}^{(l)}$ need not be constant. This, however, does not affect the form of $\hat{\e}$, or of $M_1$, $M_2$.

We now turn our attention to the total walk operator given in Eq.~\eqref{eqn:totalwalkoperator} which obeys the following differential equation (up to a normalization factor):
\begin{equation}
  \label{eqn:propagation}
  \di_x M(x) = -\hat{\e}(x) M(x) .
\end{equation}
We can write $M(x)$ in the diagonal basis of $\hat{\e}(x)$ by introducing the operator
\begin{equation}
  \label{eqn:N}
  N(x) = \lp \bigoplus_{k=1}^{S(\mbV)} U^{\dag}_l(x) \rp M(x) \lp \bigoplus_{l=1}^{S(\mbV)} U_l(x) \rp .
\end{equation}
Eq.~\eqref{eqn:propagation} can then be rewritten as
\begin{equation}
  \label{eqn:Npropagation}
  \di_x N(x) = - \bigoplus_{l=1}^{S(\mbV)} D_l(x) N(x) - i \bigoplus_{l=1}^{S(\mbV)} \ls \di_x G_l(x), N_l(x) \rs .
\end{equation}
Note that since $M(0) = I$ then $N(0)=I$ and any solution $N(x)$ must be diagonal. Thus the total walk operator and the $\hat{\e}(x)$ operator are diagonal in the same basis.
\end{proof}

Lemmas~\ref{lem:closure},~\ref{lem:wedderburn}, and~\ref{lem:diagonal} combined give the full characterization of $M_1$ and $M_2$ operators achievable by our scheme:

\begin{thm}[Main result] \label{thm:main}
A continuous measurement using qubit probes and closed-loop feedback on the interaction Hamiltonian (as in Fig.~\ref{fig:circuit}) can realize any measurement $\lb M_1, M_2 \rb$ of the form
\begin{equation}
    M_1 = \bigoplus_{l=1}^{S(\mbV)} U^{\dag}_l \lp \bigoplus_{i=1}^{\mathrm{rank}\lp \mbB_l \rp} \lambda^{(l)}_i \Pi^{(l)}_i \rp U_l,
\end{equation}
where $M_2 = (I - M_1^{\dag}M_1)^{1/2}$ is diagonal in the same basis. The parameters $\lambda^{(l)}_i$ are real and contained in $(0,1)$ and $\Pi^{(l)}_i$ is a projector onto $1$, $2$, or $4$ basis states.
\end{thm}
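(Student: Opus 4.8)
The strategy is to treat Theorem~\ref{thm:main} as an assembly of the three preceding lemmas, the only genuinely new work being the solution of the walk-operator propagation equation at the endpoints. Lemma~\ref{lem:closure} confines the admissible $\hat{\e}(x)$ to a Jordan algebra $\mbV$; Lemma~\ref{lem:wedderburn} splits each block as $\hat{\e}_l = U_l D_l U^{\dag}_l$ with the representation data of Table~\ref{tab:repr}; and Lemma~\ref{lem:diagonal} places the total walk operator $M(x)$ in the same eigenbasis as $\hat{\e}(x)$. What remains is to (i) integrate the decoupled eigenvalue equations coming from Eq.~\eqref{eqn:propagation}, (ii) evaluate them at the absorbing boundary $x = X$ to read off the spectrum of $M_1$, (iii) establish the eigenvalue range $\lp 0,1 \rp$ and the ranks of the projectors $\Pi^{(l)}_i$, and (iv) obtain $M_2$ as the diagonal complement $\lp I - M_1^{\dag} M_1 \rp^{1/2}$.

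First I would pass to the common eigenbasis of Lemma~\ref{lem:diagonal}, where $N(x)$ is diagonal and Eq.~\eqref{eqn:Npropagation} decouples into the scalar equations $\di_x n^{(l)}_i = - d^{(l)}_i(x)\, n^{(l)}_i$ with $d^{(l)}_i(x) = \tanh\lp x - c^{(l)}_i \rp$ from Eq.~\eqref{eqn:d_l}. Using $\int \tanh = \ln \cosh$ together with $N(0) = I$, each eigenvalue integrates to $n^{(l)}_i(x) = \cosh\lp c^{(l)}_i \rp / \cosh\lp x - c^{(l)}_i \rp$. Setting $x = X$ gives the unnormalized spectrum of $M_1 \propto M(X)$, and fixing the overall proportionality constant turns these into the $\lambda^{(l)}_i$; the complementary outcome $M_2$ is then handled in step (iv).

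For the range $\lambda^{(l)}_i \in \lp 0,1 \rp$, I would note that as $X \to \infty$ the ratios $n^{(l)}_i(X) / n^{(l)}_j(X)$ converge to finite, strictly positive limits set by the centers $c^{(l)}_i$ --- the path-independence the reversibility equation Eq.~\eqref{eqn:revers} was designed to enforce. For any finite choice of centers these limits are positive and finite, so after normalizing the dominant eigenvalue strictly below unity (so that $M_2 \neq 0$) every $\lambda^{(l)}_i$ lands strictly inside $\lp 0,1 \rp$; saturating either endpoint would demand $c^{(l)}_i \to \pm \infty$, i.e. unbounded controls, and conversely varying the centers sweeps out every admissible spectrum, making the list exhaustive. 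The ranks of the projectors $\Pi^{(l)}_i$ I would read directly off Table~\ref{tab:repr}: a real block $\mcal{H}_n\lp\mbR\rp$ or a natively complex block $\mcal{H}_n\lp\mbC\rp$ has simple diagonal $D_l$, whereas the embeddings carrying the tensor factors $I_2$ and $I_4$ repeat each distinct value of $D_l$ two or four times, with the entire degenerate eigenspace rotated together by $U_l$. Each distinct value of $n^{(l)}_i$ therefore yields a projector of rank $1$, $2$, or $4$ according to the block's row in the table.

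Finally, $M_2 = \lp I - M_1^{\dag} M_1 \rp^{1/2}$ follows from two-outcome completeness together with simultaneous diagonalizability. Because the per-step operators obey $M_+^{\dag} M_+ + M_-^{\dag} M_- = I$ to the retained order, the composed endpoint operators form a genuine two-outcome POVM, $M_1^{\dag} M_1 + M_2^{\dag} M_2 = I$, and since Lemma~\ref{lem:diagonal} puts $M_2$ in the same eigenbasis as $M_1$, completeness acts eigenvalue by eigenvalue and produces the stated closed form. I expect the principal obstacle to be step (ii): the literal product defining $M(X)$ carries a vanishing scalar prefactor, so one must justify replacing it by a normalized, path-independent operator and show the eigenvalue ratios genuinely converge in the continuum limit. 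That analytic point --- rather than the essentially bookkeeping arguments establishing the block and rank structure --- is where the reversibility condition does its real work.
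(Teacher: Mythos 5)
Your proposal is correct and follows essentially the same route as the paper: it assembles Lemmas~\ref{lem:closure}--\ref{lem:diagonal}, solves the decoupled eigenvalue equations of Eq.~\eqref{eqn:Npropagation} using the $\tanh$ solution of Eq.~\eqref{eqn:d_l} (your explicit $\cosh(c^{(l)}_i)/\cosh(x-c^{(l)}_i)$ is just the evaluated form of the paper's $\exp(\pm\int_0^x \tanh(y-c^{(l)}_i)\,dy)$), reads the projector ranks $1$, $2$, $4$ off Table~\ref{tab:repr}, and obtains the open interval $(0,1)$ from the limits $c^{(l)}_i \to \pm\infty$, exactly as the paper does. Your added justification of the completeness relation defining $M_2$ and your concern about the normalization of the continuum-limit product are minor elaborations of points the paper leaves implicit (via the proportionality $M_1 \propto M(X)$), not a different approach.
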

\begin{proof}
Recall that the number of distinct diagonal entries possible in $D_l(x)$ is $\mathrm{rank} \lp \mbB_l \rp$. However, each distinct entry can appear $1$, $2$, or $4$ times depending on the particular representation from Table~\ref{tab:repr}. Using lemma~\ref{lem:diagonal} we can plug our solution for $D_l(x)$ into Eq.~\eqref{eqn:Npropagation} to find that the diagonal entries of $N(x)$ are $\exp ( \int_0^x \tanh (y - c^{(l)}_i) dy )$. The total walk operator $M(x)$ must then be
\begin{equation}
    M(x) = U^{\dag}_l(x) \lp \bigoplus_{i=1}^{\mathrm{rank} \lp \mbB_l \rp} e^{\int_0^x \tanh \lp y - c^{(l)}_i \rp dy} \Pi^{(l)}_i \rp U_l(x) .
\end{equation}
The endpoint operators $M_1$ and $M_2$ are proportional to $M(X)$ and $M(-X)$. Their diagonal entries are $\lambda^{(l)}_i$, which after renormalization approach $0$ when $c^{(l)}_i \rightarrow \infty$ and $1$ when $c^{(l)}_i \rightarrow -\infty$. 
\end{proof}

Note that in theorem~\ref{thm:main} the eigenvalues of $M_1$ and $M_2$ are restricted to lie in the open set $(0,1)$, not the closed set $[0,1]$. This is a consequence of the reversibility condition at the points $x=X-\d$ and $x=X+\d$. At these points, setting any eigenvalue of the total walk operator to $0$ would be effectively a projection, which is an irreversible operation for the random walk. However we can approach arbitrarily close to any such projective measurement.

To allow for direct comparisons with the scheme of~\cite{constham}, we provide the following corollary.

\begin{cor}[Spectrum of the measurement]
Given the ability to perform any unitary transformations directly before and after the continuous process of theorem~\ref{thm:main}, one can continuously decompose any measurement with $\sum_{l=1}^{S(\mbV)} \mathrm{rank} \lp \mbB_l \rp$ distinct singular values.
\end{cor}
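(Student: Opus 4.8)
The plan is to reduce the corollary to a statement purely about singular-value spectra, exploiting the fact that the freedom to apply arbitrary unitaries before and after the process removes all basis-dependent information from a two-outcome measurement. Concretely, I would first observe that any $\lb M_1', M_2' \rb$ obeying the completeness relation $M_1'^{\dag} M_1' + M_2'^{\dag} M_2' = I$ is fixed, up to one input and one output unitary, by the singular values of $M_1'$: writing the singular-value decomposition $M_1' = A \Sigma B^{\dag}$, the complementary operator $M_2' = \lp I - M_1'^{\dag} M_1' \rp^{1/2}$ is determined by $\Sigma$ alone. Thus, conjugating the continuous process by $B^{\dag}$ on the way in and $A$ on the way out, the task collapses to realizing the positive diagonal operator $\Sigma$.

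With that reduction in hand, the next step is to invoke Theorem~\ref{thm:main} to count how rich a diagonal $\Sigma$ the scheme can produce. The singular values of the theorem's $M_1$ coincide with its eigenvalues $\lambda^{(l)}_i$, since $M_1$ is block-diagonalized by the $U_l$ and has positive diagonal entries. Within the $l$-th simple component the free parameters are the constants $c^{(l)}_i$ of Eq.~\eqref{eqn:d_l}, one per rank, so that block supplies at most $\mathrm{rank}\lp \mbB_l \rp$ distinct singular values; summing over simple components yields the advertised total $\sum_{l=1}^{S(\mbV)} \mathrm{rank}\lp \mbB_l \rp$. To finish I would argue the converse: each $\lambda^{(l)}_i$ is tuned independently through $c^{(l)}_i$, so any prescribed set of distinct values in $(0,1)$ of this cardinality is attainable, and any multiplicity exceeding the fixed $1$, $2$, or $4$ of a single projector $\Pi^{(l)}_i$ is obtained by forcing several $c^{(l)}_i$ to coincide. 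Recomposing this realization of $\Sigma$ with the unitaries $A$ and $B^{\dag}$ then delivers the desired target measurement.

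I expect the main obstacle to be the bookkeeping of multiplicities rather than the count of distinct values: one must check that merging eigenvalues to raise a multiplicity, and distributing them across blocks, together span every admissible spectrum while remaining consistent with the representation constraints of Table~\ref{tab:repr}. The second delicate point is the open-interval restriction discussed after Theorem~\ref{thm:main}: because a singular value of $0$ or $1$ corresponds to an irreversible projection at $x = \pm X$, the corollary must be read as realizing --- or approaching arbitrarily closely --- any measurement whose number of distinct singular values is at most $\sum_{l=1}^{S(\mbV)} \mathrm{rank}\lp \mbB_l \rp$, the maximal value $n$ being attained by the single simple component $\mcal{H}_n\lp \mbC \rp$.
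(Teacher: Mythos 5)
Your counting step --- identifying the singular values of the theorem's $M_1$ with the eigenvalues $\lambda^{(l)}_i$, one tunable constant $c^{(l)}_i$ per unit of rank in each simple component, summed over components --- is correct and is exactly how the paper counts. The gap is in your reduction. You claim that any pair $\lb M_1', M_2' \rb$ obeying completeness is fixed, up to \emph{one} input and \emph{one} output unitary, by the singular values of $M_1'$, with $M_2' = \lp I - M_1'^{\dag}M_1' \rp^{1/2}$. But completeness only fixes $M_2'^{\dag}M_2'$; it leaves the polar (unitary) part of $M_2'$ entirely free: $M_2' = W_2 \lp I - M_1'^{\dag}M_1' \rp^{1/2}$ for an arbitrary unitary $W_2$ independent of the SVD of $M_1'$. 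An outcome-independent post-unitary cannot absorb this freedom. Concretely, with pre-unitary $B^{\dag}$ and post-unitary $A$ your scheme realizes the pair $\lb A \Sigma B^{\dag},\; A \lp I - \Sigma^2 \rp^{1/2} B^{\dag} \rb$; the second operator is not the $B \lp I - \Sigma^2 \rp^{1/2} B^{\dag}$ you wrote down, so the realization does not even reproduce your own target pair unless $A = B$. Worse, a pair such as $M_1' = \lambda I$, $M_2' = \sqrt{1-\lambda^2}\, W$ with $W \neq I$ can never be written in the form $\lb U \lambda V,\; U \sqrt{1-\lambda^2}\, V \rb$ (the first entry forces $U = V^{\dag}$, and then the second forces $W = I$), so it lies outside the reach of any outcome-independent correction, yet it is a measurement with few distinct singular values that the corollary claims to cover.

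The paper closes this gap by letting the post-process unitary depend on the measurement outcome, which is natural in a feedback scheme: polar-decompose each operator separately, $M_i' = W_i \lp M_i'^{\dag} M_i' \rp^{1/2}$. The two positive parts commute, since their squares sum to $I$, so they are simultaneously diagonalizable and (after a basis-aligning pre-unitary) form exactly the kind of pair produced by theorem~\ref{thm:main}. One runs the continuous process to measure $\lb \lp M_1'^{\dag}M_1' \rp^{1/2}, \lp M_2'^{\dag}M_2' \rp^{1/2} \rb$ and then applies $W_1$ or $W_2$ conditioned on which endpoint $\pm X$ the walk reached. With the reduction repaired in this way, your counting argument and your remarks on multiplicities and on the open-interval restriction go through unchanged.
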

\begin{proof}
The endpoint measurement operators $M_1$, $M_2$ in theorem~\ref{thm:main} can have up to  $\sum_{l=1}^{S(\mbV)} \mathrm{rank} \lp \mbB_l \rp$ distinct eigenvalues. We can decompose any pair of general endpoint operators $M_1$, $M_2$ using their polar decompositions $M_i = W_i (M_i^{\dag}M_i)^{1/2}$. Then, we can use a procedure like that of Figure~\ref{fig:circuit} to measure the positive Hermitian operators $(M_i^{\dag}M_i)^{1/2}$ and subsequently apply $W_i$ depending on the measurement result.
\end{proof}

In this work we've characterized the full class of continuous measurements achievable using a stream of probe qubits and a tunable interaction Hamiltonian. Given a set of linearly controlled Hamiltonian terms we provide a method to exhaustively list all continuous decompositions achievable with the control set. The class we find has a simple block-diagonal form, but results from a non-trivial application of the reversibility condition. Notably, measurements in this class have a quantifiably broader spectrum than in the case of a fixed interaction Hamiltonian.

Our work makes critical use of finite-dimensional Jordan algebras. This is surprising since these algebras have had little application elsewhere in quantum mechanics. Our model for continuous measurements does not include internal dynamics $H_S$ for the system or the probe, nor does it account for environment noise. In the presence of $H_S$, successive realizations of the continuous decomposition would yield inconsistent results unless $H_S$ commutes with the measurement operators.

The model presented here is still not the most general description of all continuous measurements realizable with a stream of probes. A completely general description would have to consider higher-dimensional probes, multiple outcomes to the weak measurement steps (as well as the endpoint measurements), and a more general reversibility condition. This is the subject of ongoing work. If Jordan algebras reappear in that scenario, then they will have found renewed application in quantum mechanics.

\begin{acknowledgments}
JF and TAB thank Daniel Lidar and Ognyan Oreshkov for useful discussions. This research was supported in part by the ARO MURI under Grant No. W911NF-11-1-0268.
\end{acknowledgments}

\bibliographystyle{apsrev4-1}
\bibliography{QuantumControl}

\begin{thebibliography}{12}%
\makeatletter
\providecommand \@ifxundefined [1]{%
 \@ifx{#1\undefined}
}%
\providecommand \@ifnum [1]{%
 \ifnum #1\expandafter \@firstoftwo
 \else \expandafter \@secondoftwo
 \fi
}%
\providecommand \@ifx [1]{%
 \ifx #1\expandafter \@firstoftwo
 \else \expandafter \@secondoftwo
 \fi
}%
\providecommand \natexlab [1]{#1}%
\providecommand \enquote  [1]{``#1''}%
\providecommand \bibnamefont  [1]{#1}%
\providecommand \bibfnamefont [1]{#1}%
\providecommand \citenamefont [1]{#1}%
\providecommand \href@noop [0]{\@secondoftwo}%
\providecommand \href [0]{\begingroup \@sanitize@url \@href}%
\providecommand \@href[1]{\@@startlink{#1}\@@href}%
\providecommand \@@href[1]{\endgroup#1\@@endlink}%
\providecommand \@sanitize@url [0]{\catcode `\\12\catcode `\$12\catcode
  `\&12\catcode `\#12\catcode `\^12\catcode `\_12\catcode `\%12\relax}%
\providecommand \@@startlink[1]{}%
\providecommand \@@endlink[0]{}%
\providecommand \url  [0]{\begingroup\@sanitize@url \@url }%
\providecommand \@url [1]{\endgroup\@href {#1}{\urlprefix }}%
\providecommand \urlprefix  [0]{URL }%
\providecommand \Eprint [0]{\href }%
\providecommand \doibase [0]{http://dx.doi.org/}%
\providecommand \selectlanguage [0]{\@gobble}%
\providecommand \bibinfo  [0]{\@secondoftwo}%
\providecommand \bibfield  [0]{\@secondoftwo}%
\providecommand \translation [1]{[#1]}%
\providecommand \BibitemOpen [0]{}%
\providecommand \bibitemStop [0]{}%
\providecommand \bibitemNoStop [0]{.\EOS\space}%
\providecommand \EOS [0]{\spacefactor3000\relax}%
\providecommand \BibitemShut  [1]{\csname bibitem#1\endcsname}%
\let\auto@bib@innerbib\@empty
\bibitem [{\citenamefont {Brune}\ \emph {et~al.}(1990)\citenamefont {Brune},
  \citenamefont {Haroche}, \citenamefont {Lefevre}, \citenamefont {Raimond},\
  and\ \citenamefont {Zagury}}]{rydberg}%
  \BibitemOpen
  \bibfield  {author} {\bibinfo {author} {\bibfnamefont {M.}~\bibnamefont
  {Brune}}, \bibinfo {author} {\bibfnamefont {S.}~\bibnamefont {Haroche}},
  \bibinfo {author} {\bibfnamefont {V.}~\bibnamefont {Lefevre}}, \bibinfo
  {author} {\bibfnamefont {J.~M.}\ \bibnamefont {Raimond}}, \ and\ \bibinfo
  {author} {\bibfnamefont {N.}~\bibnamefont {Zagury}},\ }\href {\doibase
  10.1103/PhysRevLett.65.976} {\bibfield  {journal} {\bibinfo  {journal} {Phys.
  Rev. Lett.}\ }\textbf {\bibinfo {volume} {65}},\ \bibinfo {pages} {976}
  (\bibinfo {year} {1990})}\BibitemShut {NoStop}%
\bibitem [{\citenamefont {Dotsenko}\ \emph {et~al.}(2010)\citenamefont
  {Dotsenko}, \citenamefont {Bernu}, \citenamefont {Del\'{e}glise},
  \citenamefont {Sayrin}, \citenamefont {Brune}, \citenamefont {Raimond},
  \citenamefont {Haroche}, \citenamefont {Mirrahimi},\ and\ \citenamefont
  {Rouchon}}]{rydberg-feedback}%
  \BibitemOpen
  \bibfield  {author} {\bibinfo {author} {\bibfnamefont {I.}~\bibnamefont
  {Dotsenko}}, \bibinfo {author} {\bibfnamefont {J.}~\bibnamefont {Bernu}},
  \bibinfo {author} {\bibfnamefont {S.}~\bibnamefont {Del\'{e}glise}}, \bibinfo
  {author} {\bibfnamefont {C.}~\bibnamefont {Sayrin}}, \bibinfo {author}
  {\bibfnamefont {M.}~\bibnamefont {Brune}}, \bibinfo {author} {\bibfnamefont
  {J.~M.}\ \bibnamefont {Raimond}}, \bibinfo {author} {\bibfnamefont
  {S.}~\bibnamefont {Haroche}}, \bibinfo {author} {\bibfnamefont
  {M.}~\bibnamefont {Mirrahimi}}, \ and\ \bibinfo {author} {\bibfnamefont
  {P.}~\bibnamefont {Rouchon}},\ }\href {\doibase
  10.1088/0031-8949/2010/T140/014004} {\bibfield  {journal} {\bibinfo
  {journal} {Physica Scripta}\ }\textbf {\bibinfo {volume} {2010}},\ \bibinfo
  {pages} {014004+} (\bibinfo {year} {2010})}\BibitemShut {NoStop}%
\bibitem [{\citenamefont {Oreshkov}\ and\ \citenamefont
  {Brun}(2005)}]{weakuniversal}%
  \BibitemOpen
  \bibfield  {author} {\bibinfo {author} {\bibfnamefont {O.}~\bibnamefont
  {Oreshkov}}\ and\ \bibinfo {author} {\bibfnamefont {T.}~\bibnamefont
  {Brun}},\ }\href@noop {} {\bibfield  {journal} {\bibinfo  {journal} {Physical
  Review Letters}\ }\textbf {\bibinfo {volume} {95}},\ \bibinfo {pages}
  {110409} (\bibinfo {year} {2005})}\BibitemShut {NoStop}%
\bibitem [{\citenamefont {Varbanov}\ and\ \citenamefont
  {Brun}(2007)}]{generalizedstochastic}%
  \BibitemOpen
  \bibfield  {author} {\bibinfo {author} {\bibfnamefont {M.}~\bibnamefont
  {Varbanov}}\ and\ \bibinfo {author} {\bibfnamefont {T.~A.}\ \bibnamefont
  {Brun}},\ }\href {\doibase 10.1103/PhysRevA.76.032104} {\bibfield  {journal}
  {\bibinfo  {journal} {Phys. Rev. A}\ }\textbf {\bibinfo {volume} {76}},\
  \bibinfo {pages} {032104} (\bibinfo {year} {2007})}\BibitemShut {NoStop}%
\bibitem [{\citenamefont {{Brun}}(2002)}]{simplemodel}%
  \BibitemOpen
  \bibfield  {author} {\bibinfo {author} {\bibfnamefont {T.~A.}\ \bibnamefont
  {{Brun}}},\ }\href {\doibase 10.1119/1.1475328} {\bibfield  {journal}
  {\bibinfo  {journal} {American Journal of Physics}\ }\textbf {\bibinfo
  {volume} {70}},\ \bibinfo {pages} {719} (\bibinfo {year} {2002})},\ \Eprint
  {http://arxiv.org/abs/quant-ph/0108132} {quant-ph/0108132} \BibitemShut
  {NoStop}%
\bibitem [{\citenamefont {Florjanczyk}\ and\ \citenamefont
  {Brun}(2014)}]{constham}%
  \BibitemOpen
  \bibfield  {author} {\bibinfo {author} {\bibfnamefont {J.}~\bibnamefont
  {Florjanczyk}}\ and\ \bibinfo {author} {\bibfnamefont {T.~A.}\ \bibnamefont
  {Brun}},\ }\href {\doibase 10.1103/PhysRevA.90.032102} {\bibfield  {journal}
  {\bibinfo  {journal} {Phys. Rev. A}\ }\textbf {\bibinfo {volume} {90}},\
  \bibinfo {pages} {032102} (\bibinfo {year} {2014})}\BibitemShut {NoStop}%
\bibitem [{\citenamefont {{Bennett}}\ \emph {et~al.}(1999)\citenamefont
  {{Bennett}}, \citenamefont {{Divincenzo}}, \citenamefont {{Fuchs}},
  \citenamefont {{Mor}}, \citenamefont {{Rains}}, \citenamefont {{Shor}},
  \citenamefont {{Smolin}},\ and\ \citenamefont {{Wootters}}}]{nonlocality}%
  \BibitemOpen
  \bibfield  {author} {\bibinfo {author} {\bibfnamefont {C.~H.}\ \bibnamefont
  {{Bennett}}}, \bibinfo {author} {\bibfnamefont {D.~P.}\ \bibnamefont
  {{Divincenzo}}}, \bibinfo {author} {\bibfnamefont {C.~A.}\ \bibnamefont
  {{Fuchs}}}, \bibinfo {author} {\bibfnamefont {T.}~\bibnamefont {{Mor}}},
  \bibinfo {author} {\bibfnamefont {E.}~\bibnamefont {{Rains}}}, \bibinfo
  {author} {\bibfnamefont {P.~W.}\ \bibnamefont {{Shor}}}, \bibinfo {author}
  {\bibfnamefont {J.~A.}\ \bibnamefont {{Smolin}}}, \ and\ \bibinfo {author}
  {\bibfnamefont {W.~K.}\ \bibnamefont {{Wootters}}},\ }\href {\doibase
  10.1103/PhysRevA.59.1070} {\bibfield  {journal} {\bibinfo  {journal} {\pra}\
  }\textbf {\bibinfo {volume} {59}},\ \bibinfo {pages} {1070} (\bibinfo {year}
  {1999})},\ \Eprint {http://arxiv.org/abs/quant-ph/9804053} {quant-ph/9804053}
  \BibitemShut {NoStop}%
\bibitem [{\citenamefont {O'Meara}(2012)}]{omeara}%
  \BibitemOpen
  \bibfield  {author} {\bibinfo {author} {\bibfnamefont {T.}~\bibnamefont
  {O'Meara}},\ }\href {https://books.google.com/books?id=geglBQAAQBAJ} {\emph
  {\bibinfo {title} {Introduction to Quadratic Forms}}},\ Classics in
  Mathematics\ (\bibinfo  {publisher} {Springer Berlin Heidelberg},\ \bibinfo
  {year} {2012})\BibitemShut {NoStop}%
\bibitem [{\citenamefont {Albert}(1947)}]{albert-wedderburn}%
  \BibitemOpen
  \bibfield  {author} {\bibinfo {author} {\bibfnamefont {A.~A.}\ \bibnamefont
  {Albert}},\ }\href {http://www.jstor.org/stable/1969210} {\bibfield
  {journal} {\bibinfo  {journal} {Annals of Mathematics}\ }\bibinfo {series}
  {Second Series},\ \textbf {\bibinfo {volume} {48}},\ \bibinfo {pages} {pp. 1}
  (\bibinfo {year} {1947})}\BibitemShut {NoStop}%
\bibitem [{\citenamefont {Penico}(1951)}]{penico-wedderburn}%
  \BibitemOpen
  \bibfield  {author} {\bibinfo {author} {\bibfnamefont {A.~J.}\ \bibnamefont
  {Penico}},\ }\href@noop {} {\bibfield  {journal} {\bibinfo  {journal} {Trans.
  Amer. Math. Soc.}\ }\textbf {\bibinfo {volume} {70}},\ \bibinfo {pages} {404}
  (\bibinfo {year} {1951})}\BibitemShut {NoStop}%
\bibitem [{\citenamefont {Jordan}\ \emph {et~al.}(1934)\citenamefont {Jordan},
  \citenamefont {Neumann},\ and\ \citenamefont {Wigner}}]{jordanmain}%
  \BibitemOpen
  \bibfield  {author} {\bibinfo {author} {\bibfnamefont {P.}~\bibnamefont
  {Jordan}}, \bibinfo {author} {\bibfnamefont {J.~v.}\ \bibnamefont {Neumann}},
  \ and\ \bibinfo {author} {\bibfnamefont {E.}~\bibnamefont {Wigner}},\ }\href
  {http://www.jstor.org/stable/1968117} {\bibfield  {journal} {\bibinfo
  {journal} {Annals of Mathematics}\ }\bibinfo {series} {Second Series},\
  \textbf {\bibinfo {volume} {35}},\ \bibinfo {pages} {pp. 29} (\bibinfo {year}
  {1934})}\BibitemShut {NoStop}%
\bibitem [{\citenamefont {Albert}(1934)}]{albert}%
  \BibitemOpen
  \bibfield  {author} {\bibinfo {author} {\bibfnamefont {A.~A.}\ \bibnamefont
  {Albert}},\ }\href {http://www.jstor.org/stable/1968118} {\bibfield
  {journal} {\bibinfo  {journal} {Annals of Mathematics}\ }\bibinfo {series}
  {Second Series},\ \textbf {\bibinfo {volume} {35}},\ \bibinfo {pages} {pp.
  65} (\bibinfo {year} {1934})}\BibitemShut {NoStop}%
\end{thebibliography}%

\end{document}